\newcommand{\shorten}[1]{}
\newcommand{\AT}{\textsf{AT}}
\newcommand{\GAC}{\widehat{\mbox{\tt AC}}}
\newcommand{\GST}{\widehat{\mbox{\tt ST}}}
\newcommand{\GTR}{\widehat{\mbox{\tt TR}}}
\newcommand{\TR}{\mbox{\tt TR}}
\newcommand{\at}{\mbox{\tt at}}
\newcommand{\man}{\texttt{crule}{!}}  
\newcommand{\ok}{\mbox{\tt ok}}
\newcommand{\tr}{\mbox{\tt tr}}
\newcommand{\trap}{\texttt{crule}{?}}   
\newcommand{\trprime}{\tr'}
\newcommand{\sscheck}{\mbox{\scriptsize\sl check}}
\newcommand{\sstriv}{\mbox{\scriptsize\sl triv}}
\newcommand{\Sprime}{S \mkern1mu {}'}
\newcommand{\tarrow}{\arrow{}{t}}
\newcommand{\trivarrow}{\xra{\sstriv}}
\newcommand{\sscontinue}{\mbox{\scriptsize\sl continue}}
\newcommand{\ssdone}{\mbox{\scriptsize\sl done}}
\newcommand{\ssnotyet}{\mbox{\scriptsize\sl notYet}}
\newcommand{\sspermit}{\mbox{\scriptsize\sl permit}}
\newcommand{\ssrefuse}{\mbox{\scriptsize\sl refuse}}
\newcommand{\ssrequest}{\mbox{\scriptsize\sl request}}
\newcommand{\checkiarrow}{\xra{\sscheck_i}}
\newcommand{\continuearrow}{\xra{\sscontinue}}
\newcommand{\donearrow}{\xra{\ssdone}}
\newcommand{\notyetarrow}{\xra{\ssnotyet}}
\newcommand{\permitarrow}{\xra{\sspermit}}
\newcommand{\refusearrow}{\xra{\ssrefuse}}
\newcommand{\requestarrow}{\xra{\ssrequest}}
\newcommand{\Client}{\textsf{Client}}
\newcommand{\PAClient}{\widehat{\textsf{Client}}}
\newcommand{\QClient}{\textsf{QClient}}
\newcommand{\PAQClient}{\widehat{\textsf{QClient}}}
\newcommand{\RClient}{\textsf{Client}\mkern1mu{}'}
\newcommand{\RClientQuotient}{\textsf{QClient}'}
\newcommand{\PARClient}{\widehat{\textsf{Client}}\mkern2mu{}'}
\newcommand{\SClient}{\textsf{Client}\mkern1mu{}''}
\newcommand{\SClientQuotient}{\textsf{QClient}''}
\newcommand{\PASClient}{\widehat{\textsf{Client}}{}''}
\newcommand{\PASClientQuotient}{\widehat{\textsf{QClient}}{}''}
\newcommand{\CS}{\textsf{CS}}
\newcommand{\DG}{\textsf{DG}}
\newcommand{\LabelsD}{\textsf{Labels}_D}
\newcommand{\NDetServer}{\textsf{Server}}
\newcommand{\OK}{\textsf{OK}}
\newcommand{\PANDetServer}{\widehat{\textsf{Server}}}
\newcommand{\Server}{\textsf{Server}}
\newcommand{\StatesD}{\textsf{States}_D}
\newcommand{\AtDoor}{\textsf{AtDoor}}
\newcommand{\Busy}{\textsf{Busy}}
\newcommand{\Idle}{\textsf{Idle}}
\newcommand{\Interrupt}{\textsf{Interrupt}}
\newcommand{\Out}{\textsf{Out}}
\newcommand{\Waiting}{\textsf{Waiting}}
\newcommand{\With}{\textsf{With}}
\newcommand{\Without}{\textsf{Without}}
\newcommand{\NDHelping}{\textsf{NDHelping}}
\newcommand{\NDChecking}{\textsf{NDChecking}}
\newcommand{\checki}{\textsf{check}}
\newcommand{\continue}{\textsf{continue}}
\newcommand{\done}{\textsf{done}}
\newcommand{\enter}{\textsf{enter}}
\newcommand{\return}{\textsf{return}}
\newcommand{\explain}{\textsf{explain}}
\newcommand{\leave}{\textsf{leave}}
\newcommand{\notYet}{\textsf{notYet}}
\newcommand{\permit}{\textsf{permit}}
\newcommand{\refuse}{\textsf{refuse}}
\newcommand{\request}{\textsf{request}}
\newcommand{\thank}{\textsf{thank}}
\newcommand{\triv}{\textsf{triv}}
\newcommand{\merge}{\mathbin{\parallel}}
\newcommand{\mycom}{\mathbin{|}}
\newcommand{\xra}[1]{\xrightarrow{#1}}
\newcommand{\bisim}
  {\;\mathrel{\underline{\hspace*{-0.15ex}\leftrightarrow\hspace*{-0.15ex}}}\;}
\newcommand{\bbisim}{\bisim_{\!\!b\ }}
\mathchardef\ls="213C   
\mathchardef\gr="213E   
\mathchardef\uparrow="0222      
\mathchardef\downarrow="0223    
\newcommand{\blankline}{\vspace*{\baselineskip}}
\newcommand{\halflineup}{\vspace*{-0.5\baselineskip}}
\newcommand{\nop}[1]{}
\newcommand{\ttcheck}{\mbox{\tt check}}
\newcommand{\AC}{\mbox{\tt AC}}
\newcommand{\ST}{\mbox{\tt ST}}
\newcommand{\TS}{\mbox{\tt TS}}
\newcommand{\ac}{\mbox{\tt ac}}
\newcommand{\st}{\mbox{\tt st}}
\newcommand{\acprime}{\ac'}
\newcommand{\stprime}{\st'}
\newcommand{\arrow}[2]{\,{{\stackrel{#2}{\rightarrow}}_{#1}}\,}
\newcommand{\aarrow}{\arrow{}{a}}
\newcommand{\lc}{\lbrace\:}
\newcommand{\rc}{\:\rbrace}
\newtheorem{theorem}{Theorem}
\newtheorem{lem}[theorem]{Lemma}
\newtheorem{definition}[theorem]{Definition}
\title{Towards reduction of Paradigm coordination models}
\author{%
  Suzana Andova\footnote{Corresponding author,
    email~\url{s.andova@tue.nl}.}
  \institute{Department of Mathematics and Computer Science \\
    TU/e, Eindhoven, the Netherlands}
  \and
  Luuk Groenewegen
  \institute{FaST Group, Leiden Institute of Advanced Computer
    Science \\ Leiden University, The Netherlands}
  \and
  Erik de Vink
  \institute{Department of Mathematics and Computer Science \\
    TU/e, Eindhoven, the Netherlands}
}
\begin{document}

\maketitle

\begin{abstract}
  \textbf{Abstract}
  The coordination modelling language Paradigm addresses collaboration
  between components in terms of dynamic constraints. Within a
  Paradigm model, component dynamics are consistently specified at a
  detailed and a global level of abstraction. To enable automated
  verification of Paradigm models, a translation of Paradigm into
  process algebra has been defined in previous work.  In this paper we
  investigate, guided by a client-server example, reduction of
  Paradigm models based on a notion of global inertness.
  Representation of Paradigm models as process algebraic
  specifications helps to establish a property-preserving equivalence
  relation between the original and the reduced Paradigm
  model. Experiments indicate that in this way larger Paradigm models can
  be analyzed.

  \smallskip

  \noindent
  \textbf{Keywords} coordination, process algebra, Paradigm, vertical
  dynamic consistency, levels of abstraction, branching bisimulation,
  globally inert, model reduction

\end{abstract}


\section{Introduction}
\label{sec:1}

Within the current software architecture practice, architectures are
mostly used for describing static aspects of software
systems. Techniques that allow system architects to describe
coordination among components within an architecture and to reason
about the dynamics of the system in its entirety, are not commonly
used. The coordination description language Paradigm helps the
designer to merge different dynamic aspects of a system.  At the same
time the language allows for the description of both detailed and
global behaviour of an individual component i.e.\ its own specific
behaviour and separately its interaction with other components, and
the language is particularly helpful in enforcing consistency in the
behaviour of large sets of interrelated components.

The coordination modeling language Paradigm~\cite{GV02,coord06}
specifies roles and interactions within collaborations between
components. Interactions are in terms of temporary constraints on the
dynamics of components. To underpin Paradigm models with formal
verification and automated analysis, the Paradigm language has been
linked with the \texttt{mCRL2} toolset~\cite{JFGDagstuhl} via its
translation to the process algebra ACP~\cite{BBR10,scp10} and with the
probabilistic modelchecker Prism~\cite{KNP09,isola2010} via a direct
encoding scheme. Process algebras (PA for short), such as CCS, CSP,
LOTOS and ACP, provide a powerful framework for formal modeling and
reasoning about concurrent systems, which turns out to be very
suitable for our needs in the setting of coordination. The key
concepts of compositionality and synchronization in process algebra
are mostly exploited in our translation. As detailed and global
aspects of component behaviour are specified by separate PA
specifications, the vertical constraints are encoded through
synchronizations expressing consistency of detailed and global
component behaviour. Horizontal constraints at the protocol level are
naturally captured by parallel composition, synchronization and
encapsulation.

While the translation to ACP and \texttt{mCRL2} allows for formal
verification of Paradigm models~\cite{scp10,ads6,isola2010}, the
omnipresent problem of state space explosion when analyzing large
models occurs here as well. In the present paper, we address the
question of reducing Paradigm models of coordination. The reduction
method applies to a component's behaviour, reducing the representation
of the vertical constraints of that component by abstracting away any
information on the component behaviour irrelevant for these
constraints. To this end, the benefit of the translation of Paradigm
language into ACP is twofold. On the one hand, we borrow the
abstraction concept from PA and apply it directly in Paradigm on
detailed behaviour. On the other hand, the translation provides us
with a formal proof methodology to reason and guarantee that the
reduced Paradigm model has the same properties as the original model.
As a matter of fact, it has gradually become evident that separating
detailed from global behaviour as supported by the Paradigm language,
allows us to reason about reduction by abstraction in a rather natural
way. We shall clarify this point after the Paradigm overview, at the
end of Section~\ref{sec:2}.

Our work on dynamic consistency in a horizontal and vertical dimension
has been influenced by the work of
K\"{u}ster~\cite{EHMG02,kuester}. Related work includes the Wright
language~\cite{All97:phd} based on CSP provides FDR support to check
both types of consistency properties. Other bridges from software
architecture to automated verification include the pipeline from UML
via Rebeca and Promela to the SPIN model-checker and from UML via
Object-Z and CSP to the FDR
model-checker~\cite{SMSB05,MORW08}. Process algebra driven prototyping
as coordination from CCS is proposed in~\cite{RB05}. The skeletons
generated from CCS-specifications overlap with Paradigm
collaborations.  In the TITAN framework~\cite{PNMC07}, CCS is playing
a unifying role in a heterogeneous environment for aspect-oriented
software engineering. Recently the coordination language Reo has been
equipped with a process algebraic
interpretation~\cite{Arb04:mscs,sefm2010}. The encoding of Reo into
\texttt{mCRL2} and subsequent analysis has been integrated in the ECT
toolset for Reo~\cite{Kra11:phd}.

We present our idea by means of an example. The system we consider
consists of~$n$ clients who try to get service from one server
exclusively, a critical section problem, where the server is supposed
to choose the next client in a non-deterministic manner.  While the
translation of the Paradigm model into PA for the example is done
manually, the toolset \texttt{mCRL2} is exploited to generate the
complete state spaces, on which further analysis can be done. Initial
results show a substantial reduction in the size of the state
space. In Section~\ref{sec:2} Paradigm is summarized on the basis of
the above example.  Section~\ref{sec:3} briefly introduces our process
algebra translation for the example model.  In Section~\ref{sec:4} we
present our reduction techniques.  Section~\ref{sec:5} concludes the
paper.


\section{Paradigm and  a critical section model}
\label{sec:2}

\noindent
This section briefly describes the central notions of Paradigm: STD,
phase, (connecting) trap, role and consistency rule.
\begin{itemize}
\item An \emph{STD} $Z$ (\emph{state-transition diagram}) is a triple
  $Z=<\ST, \AC, \TR>$ with $\ST$~the set of states, $\AC$~the set of
  actions and $\TR \subseteq \ST \times \AC \times \ST$ the set of
  transitions of $Z$,  notation $x \aarrow x'$.
\item A \emph{phase} $S$ of an STD $Z=<\ST,\AC,\TR>$ is
  an STD $S=<\st,\ac, \tr>$ such that $\st \subseteq \ST$, $\ac
  \subseteq \AC$ and $\tr \subseteq \lc (x,a,x') \in \TR \mid x,x' \in
  \st, a \in \ac \rc$.
\item A \emph{trap}~$t$ of phase $S=<\st,\ac,\tr>$ of STD $Z$ is a non-empty
  set of states $t \subseteq \st$ such that $x \in t$ and $x \,
  \aarrow \, x' \in \tr$ imply $x' \in t$.
  A trap~$t$ of phase $S$ of STD $Z$ \emph{connects}
  phase~$S$ to a phase $\Sprime = <\stprime, \acprime,
  \trprime>$ of $Z$ if $t \subseteq \stprime$. Such trap-based
  connectivity between
  two phases of $Z$ is called a \emph{phase transfer} and is denoted
  as $S \tarrow \Sprime$.
\item A \emph{partition} $\pi=\lc (S_i,T_i) \mid i \in I \rc$ of an
  STD $Z=<\ST,\AC,\TR>$, $I$ a non-empty index set, is a set of pairs
  $(S_i,T_i)$ consisting of a phase $S_i=<\st_i,\ac_i, \tr_i>$ of $Z$
  and of a set $T_i$ of traps of $S_i$.
\item A \emph{role} at the level of a partition $\pi = \lc (S_i,T_i)
  \mid i \in I \rc$ of an STD $Z= <\ST,\AC,\TR>$ is an STD $Z(\pi) =
  <\GST,\GAC,\GTR>$ with $\GST \subseteq \lc S_i \mid i \in I \rc$,
  $\GAC \subseteq \bigcup_{i \in I} T_i$ and $\GTR \subseteq \lc S_i
  \tarrow S_j \mid i,j \in I, t \in \GAC \rc$ a set of phase
  transfers. $Z$ is called the \emph{detailed} STD underlying
  \emph{global} STD~$Z(\pi)$, being role $Z(\pi)$.
\item A \emph{consistency rule} or \emph{protocol step} for an
  ensemble of STDs $Z, Z_1 , \ldots , Z_k$ and roles $Z_1(\pi_{1}) ,
  \ldots , Z_k(\pi_{k})$ is a nonempty set of phase
  transfers preceded by
  one extra transition.
\item Let $Z \colon x \aarrow x' \ast Z_1(\pi_{1}) \colon S \mkern1mu
  {}_{1}' \tarrow S \mkern1mu {}_{1}'', \dots, Z_k(\pi_{k}) \colon S
  \mkern1mu {}_{k}' \tarrow S \mkern1mu {} _k''$ be a consistency
  rule for a given ensemble; $Z_i , \ldots , Z_k$ are
  \emph{participants} of it, $Z$ is \emph{conductor}.
\item A Paradigm \emph{model} is an ensemble of STDs, roles thereof
  and consistency rules.
\end{itemize}

\noindent
The above notions constitute Paradigm models.  The semantics thereof
are roughly as follows: a consistency rule has synchronization of its
phase transfers and its conductor transition, only if all connecting
traps mentioned have been entered. Detailed transitions are allowed in
the current state of an STD, only if the current phase (state) of each
role of the STD contains the transition. In this way, phases are
constraints on underlying STD dynamics imposed by protocols (sets of
protocol steps).
In a mirrored way, traps impose constraints on the behaviour at the
protocol level, as traps are involved in the firing of consistency
rules.

\begin{figure}[hbt!]
\vspace{-0.5cm}
  \begin{center}
  \epsfig{file=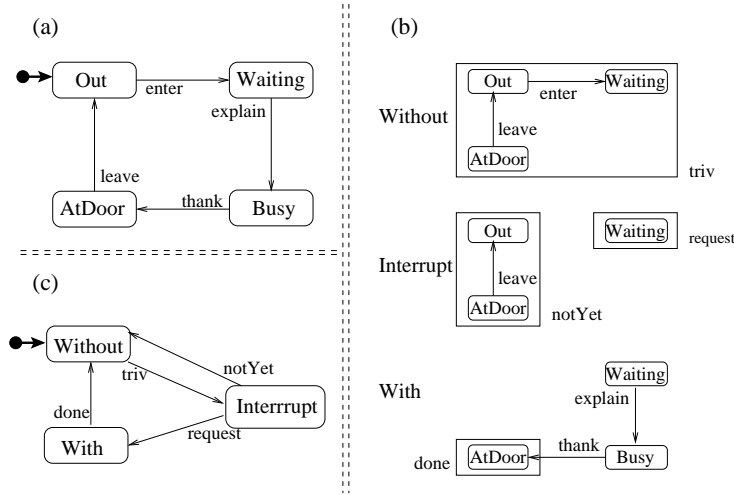,scale=0.50}
  \end{center}
\vspace{-0.5cm}
  \caption{(a)~detailed STD of $\Client$, (b) partition of three
    phases, (c)~global STD $\Client(\CS)$.}
  \label{visualsyntax1}
\end{figure}

\noindent
An STD is a step-wise description of the dynamics belonging to a
component.  It is visualized as a directed graph: its nodes are
states, its action-labeled edges are transitions. Initial states are
graphically indicated by a black
dot-and-arrow. Figure~\ref{visualsyntax1}a gives the so-called
detailed STD of a $\Client$ in and around a shop: starting in state
$\Out$ the client cycles through states $\Waiting$, $\Busy$, $\AtDoor$
and $\Out$ again, subsequently.  The entire system we consider,
contains $n$~such clients, dynamically the same, plus one different
component, the server.  For the complete system the overall
requirement is that only one client at a time, out of all $n$~clients,
is allowed to be in its state $\Busy$. So, being in state $\Busy$ is a
Critical Section problem (abbreviated $\CS$).  To solve it, ongoing
$\Client_i$ dynamics is constrained by the phase prescribed
currently. Figure~\ref{visualsyntax1}b visualizes phases $\Without$,
$\Interrupt$ and $\With$. Phase $\Without$ excludes being in state
$\Busy$ by prohibiting to take the actions $\explain$ and
$\thank$. Contrarily, phase $\With$ allows both, going to and leaving
state $\Busy$. Finally, the intermediate phase $\Interrupt$ is an
interrupted form of $\Without$, as action $\enter$ cannot be taken,
but being in state $\Waiting$ is allowed, though.

In view of a transfer from the current phase into a next phase to occur,
enough progress within the current phase must have been made: a
connecting trap has to be entered first. Figure~\ref{visualsyntax1}b
pictures relevant connecting traps for the above three phases, drawn
as rectangles around the states the trap consists of. In particular, we
need trap $\triv$ to be connecting from $\Without$ to $\Interrupt$,
trap $\notYet$ to be connecting from $\Interrupt$ back to $\Without$,
trap $\request$ to be connecting from $\Interrupt$ forward to $\With$
and finally, trap $\done$ to be connecting from $\With$ back to
$\Without$. In this manner, Figure~\ref{visualsyntax1}b gives all
ingredients needed for the dynamics of a $\Client_i$ STD at the
level of partition $\CS$: see role $\Client_i(\CS)$
in Figure~\ref{visualsyntax1}c and repeated in Figure~\ref{visualsyntax2}a.

\begin{figure}[bht!]
\vspace{-0.5cm}
  \begin{center}
  \centerline{\epsfig{file=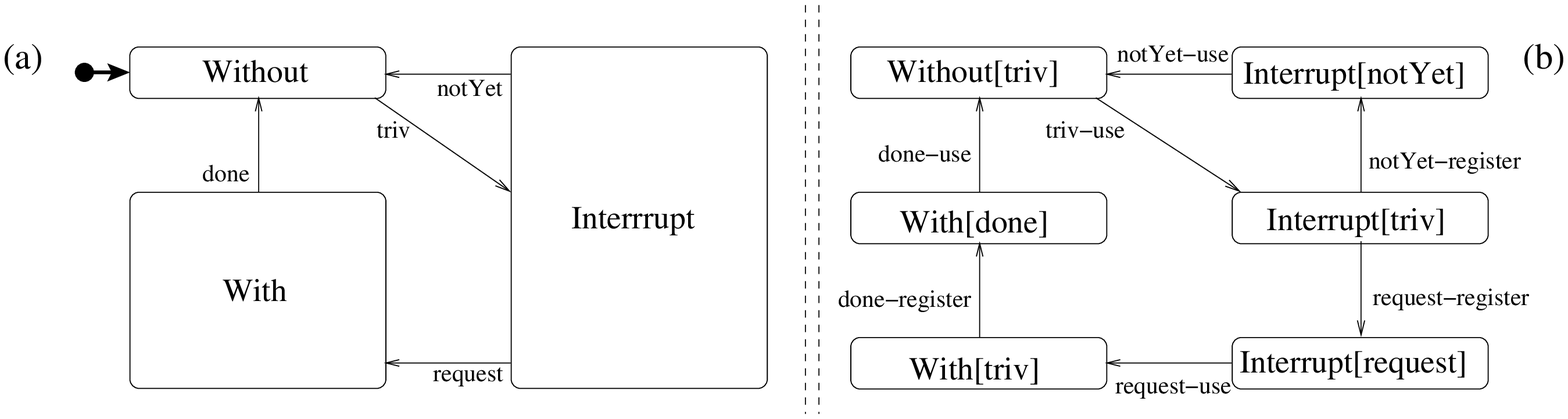,scale=0.50}}
  \end{center}
\vspace{-0.8cm}
  \caption{(a)~global process $\Client(\CS)$ and (b)~its refinement
    in view of translation.}
  \label{visualsyntax2}
\end{figure}

\noindent
Figure~\ref{visualsyntax2}b presents a slightly refined diagram of the
proper role STD in part~(a). State names here, additionally keep track
of the trap most recently entered within a phase, as if it could be
taken as a smaller phase committed to within the larger one imposed.
Action names still refer to a trap that is entered, but they
additionally discriminate between, first, \emph{registering} the trap
has been entered and, second, thereafter \emph{using} this for a phase
transfer.  This more refined view represents the starting point for
the ACP encoding of the global process, as discussed in the next
section.

So far, we have discussed `sequential composition' of constraints:
imposed phases alternated with traps committed to. Semantically, any
current phase constrains the enabled transitions to those belonging to
the phase. So, at any moment a current detailed state belongs to the
current phase too. From this it follows, that the dynamics of the
detailed STD and of the global STD are consistent, the current global
phase reflects the current local state. Paradigm's consistency rules
are to the essence of `parallel composition': they express coupling of
role steps of arbitrarily many \emph{participants} and a detailed step
of one \emph{conductor}. Any consistency rule specifies the
simultaneous execution of the steps mentioned in the rule, a
transition of the conductor and phase transfers for the participants.

To continue the example of $n$~clients getting service, one at a time,
we present a non-deterministic coordination solution for the
$n$~clients via a server. The non-deterministic server checks the
clients in arbitrary order. If a client, when checked, wants help, it
gets help by being permitted to enter the critical section. If not,
permission to enter is refused to it. Only after a client's leaving
the critical section, the server stops helping it by returning to the
idle position, from which it arbitrarily selects a next client for
checking. In the example, the server provides a unique conductor step
for each consistency rule.  The STD $\NDetServer$ of the server is
drawn in Figure~\ref{conductorND}. As conductor, detailed steps of
$\NDetServer$ need to be coupled to phase transfers of each
$\Client_i$, $1 \leq i \leq n$.

\begin{figure}[thb]
  \begin{center}
    \input{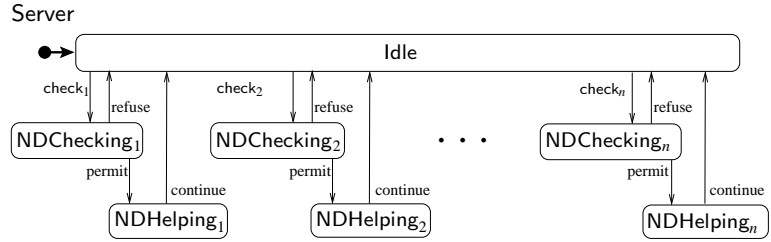}
  \end{center}
\vspace{-0.5cm}
  \caption{STD non-deterministic server $\NDetServer$.}
  \label{conductorND}
\end{figure}

  \fontsize{9pt}{0pt}
\begin{align}
    &
    \NDetServer \mbox{ } \colon \mbox{ }  \Idle  \checkiarrow
    \NDChecking_i  \mbox{ }
    \ast
    \mbox{ } \Client_i(\CS) \mbox{ } \colon  \Without  \trivarrow
    \Interrupt
    \label{cs01}
    \\[0.1cm]
    &
    \NDetServer \mbox{ } \colon \mbox{ }  \NDChecking_i  \refusearrow
    \Idle  \mbox{ }
    \ast
    \mbox{ } \Client_i(\CS) \mbox{ } \colon  \Interrupt
    \notyetarrow  \Without
    \\[0.1cm]
    &
    \NDetServer \mbox{ } \colon \mbox{ }  \NDChecking_i  \permitarrow
    \NDHelping_i  \mbox{ }
    \ast
    \mbox{ }  \Client_i(\CS) \mbox{ } \colon  \Interrupt
    \requestarrow  \With
    \\[0.1cm]
    &
    \NDetServer \mbox{ } \colon \mbox{ }  \NDHelping_i  \continuearrow
    \Idle  \mbox{ }
    \ast
    \mbox{ } \Client_i(\CS) \mbox{ } \colon  \With  \donearrow
    \Without
\end{align}

\noindent
Note that for this protocol, each conductor step of the server
corresponds to a phase change of exactly one client. E.g., the server
moves from the state $\Idle$ to $\NDChecking_i$ iff the global client
process $\Client_i(\CS)$ changes from the phase $\Without$ to the
phase $\Interrupt$. The server then makes a $\ttcheck_i$
transition. In general, there is a precondition, however. Within the
phase $\Without$ sufficient progress should have been made, such that
the particular trap has been reached. In this case, it is the trivial
trap $\triv$ rendering the requirement superfluous, as the trivial
trap, containing all states of the phase $\Without$, is trivially
reached. For the actual checking, the next two consistency rules,
dependent on the trap $\notYet$ and $\request$, respectively, decide
the target of the conductor transition and the next participant phase,
viz.\ state $\Idle$ and phase $\Without$ or state $\NDHelping_i$ and
phase $\With$, respectively. The last consistency rule couples the
conductor's returning from state $\NDHelping_i$ to $\Idle$ with trap
$\done$ of phase $\With$ having been entered.

The consistency rules specify \emph{horizontal} dynamic consistency,
i.e.~across components, here between server and clients. Such
specification is about coordination, i.e.~what Paradigm actually
models, step-wise computation of next behavioural constraints.  The
constraining property imposed by a phase implies, an underlying
$\Client_i$ transition is allowed only if it belongs to the phase that
corresponds to the current state of the role of $\Client_i$ in the
$\CS$ collaboration, i.e.\ the current state of the global STD
$\Client_i(\CS)$. The constraining property $\Client_i$ commits to by
entering a trap, allows for a phase transfer, i.e.\ a transition of
$\Client_i(\CS)$, once the (connecting) trap is entered.  These two
constraining properties syntactically guarantee \emph{vertical}
dynamic consistency, i.e.~within a component between its underlying
STD and its role.

As mentioned in Section~\ref{sec:1}, is has become evident to us that
separating detailed from global behaviour as supported by the Paradigm
language, allows one to reason about reduction by abstraction in a
rather natural way.  The intuitive explanation for this is as follows:
Global behaviour, actually defining phases a system needs to go
through during a particular coordination solution, is built on top of
the detailed behaviour: each global phase represents a sub-behaviour
of the underlying detailed behaviour. Nevertheless, not every action
at the detailed level affects the current global phase. Only some
actions may enable a next phase transfer and hence may affect the
protocol execution. Thus, it is natural to try to detect the detailed
actions that do not matter for, i.e.~that cannot be observed at, the
protocol level.  By hiding them, a reduced detailed behaviour is
obtained, just containing all relevant information and actions needed
for proper execution of the component role within the protocol. As we
shall show for our running example, this information can be extracted
from the hierarchical structure per component in the Paradigm model,
see Subsection~\ref{subsec:4-1}.  Note that all interaction between
components (horizontal) and all hierarchical structure within
components (vertical), as specified in the Paradigm model in an
explicit manner, are flattened in the PA translation and hence their
character being either horizontal or vertical, gets lost.  Thus, after
the PA translation only a single communication pattern remains, from
which it is no longer straightforward to extract information needed
for proper reduction of detailed behaviour.

Yet another aspect of the Paradigm model that can be justified and
confirmed by the approach taken here is discussed shortly in the
paper, see Subsection~\ref{subsec:4-2}. From the definition of
Paradigm, although provided with a formal operational semantics, it is
not straightforward to see to what extent a component's detailed
behaviour is not affected by some constraints or coordination rule.
In particular, consistency rules for some complex model may have an
unforeseen effect on detailed component behaviour, in particular a
deadlock at the detailed level.  The translation from Paradigm to ACP
combined with the abstraction techniques discussed in the next section
supports formal verification of separate protocols and of overall
coordination.


\section{Paradigm model as a process algebraic specification}
\label{sec:3}

In this section we show by means of the example introduced in
Section~\ref{sec:2}, how a Paradigm model can be translated into
ACP\@. The general translation has been defined in~\cite{scp10} to
which we refer for more detail. Roughly, each STD will be represented
by a recursive specification.  Vertical consistency in Paradigm has to
be expressed explicitly.  In particular, to represent the interaction
of a detailed STD and the global STD, we use actions $\ok!(.)$ and
$\ok?(.)$ that take the labels of \emph{detailed steps} as their
argument.  The complementary actions synchronize if the step of the
detailed STD is allowed by the current phase of the global STD as
constraint. Thus, synchronization of actions $\ok!(\cdot)$
and~$\ok?(\cdot)$ between global STD and detailed STD reflect the
current permission for the detailed step to be taken.

In addition, we use the complementary actions $\at!(.)$ and $\at?(.)$
that take \emph{detailed states} as their arguments. The complementary
actions synchronize if the step to be taken by the global STD is
allowed by the current trap of the detailed STD as constraint.  Upon
synchronization of $\at!(\cdot)$ and~$\at?(\cdot)$ the global process
will update its trap information, if applicable.  For the
communication within the protocol, here between the server and its
clients, actions $\man(.)$ on the side of a conductor are meant to
complement $\trap(.)$ actions on the side of the
employees. Synchronization leads to execution of the corresponding
consistency rule: a detailed transition of the conductor, phase
changes for the employees involved.

For the concrete example the above amounts to the following. We adorn
the $n$~processes $\Client_i$ with the actions~$\at!$, conveying state
information, and actions~$\ok?$, regarding transition eligibility.
\begin{displaymath}
  \fontsize{9pt}{0pt}
  \def\arraystretch{1.1}
  \begin{array}{@{}rcl}
    \PAClient_i & = & \Out_i
    \\
    \Out_i & = & \at ! (\Out_i) \cdot \Out_i
    + \ok ? (\enter_i) \cdot \Waiting_i
    \\
    \Waiting_i & = & \at ! (\Waiting_i) \cdot \Waiting_i
    + \ok ? (\explain_i) \cdot \Busy_i
    \\
    \Busy_i & = & \ok ? (\thank_i) \cdot \AtDoor_i
    \\
    \AtDoor_i & = & \at ! (\AtDoor_i) \cdot \AtDoor_i
    + \ok ? (\leave_i) \cdot \Out_i
  \end{array}
  \def\arraystretch{1.0}
\end{displaymath}
The LTS of~$\PAClient_i$ of~$\Client_i$ is given in
Figure~\ref{basicACPspecifications}a (with the subscript~$i$
suppressed).
The definition of process $\PAClient_i$ assures, the process really
starts in close correspondence to starting state $\Out$ from
Figure~\ref{visualsyntax1}a. The definition of process $\Out_i$
expresses: (1) upon being asked, it can exchange state information
while keeping the process as-is; (2) it can ask for permission to take
the analogue of transition $\enter$ from Figure~\ref{visualsyntax1}a,
in view of continuing with process $\Waiting_i$ thereafter. Note, in
the definition of process $\Busy_i$ the possibility for exchange of
state information is not specified, as asking for it does never occur.
Note, in Figure~\ref{visualsyntax1}b, state $\Busy$ does not belong to
trap $\done$.

\begin{figure}[bht!]
  \centering
  \input{Figures/clientAlla.pstex_t}
  \caption{Processes (a)~$\PAClient$ and (b)~$\PAClient(\CS)$.}
  \label{basicACPspecifications}
\end{figure}

In a similar manner, the~$n$ processes $\Client_i(\CS)$ are augmented
with the actions $\at?$ and~$\ok!$. Now, at the global level, the
relevant information is the pair of the current phase and the current
trap. For example, the recursion variable $\Without_i[\triv]$
represents that $\Client_i$ is constrained to phase $\Without$ and
hasn't reached a specific trap, whereas $\Interrupt_i[\notYet]$
reflects that $\Client_i$ committed to phase $\Interrupt$ resides in
trap $\notYet$. As these global processes play a participant role in the
protocol, the $\trap$ actions for engaging in a consistency rule have
been put in place as well.

\begin{displaymath}
  \fontsize{9pt}{0pt}
  \def\arraystretch{1.1}
  \begin{array}{@{}rcl}
    \PAClient_i(\CS) & = & \Without_i[\triv]
    \\
    \Without_i[\triv] & = &
    \ok ! (\leave_i) \cdot \Without_i[\triv] +
    \ok ! (\enter_i) \cdot \Without_i[\triv] +
    {} \\  &  &
    \trap  (\triv_i) \cdot \Interrupt_i[\triv]
    \\
    \Interrupt_i[\triv] & = &
    \at ? (\AtDoor_i) \cdot \Interrupt_i[\notYet] +
    \at ? (\Out_i) \cdot \Interrupt_i[\notYet] +
    {} \\ &  &
    \at ? (\Waiting_i) \cdot \Interrupt_i[\request] +
    \ok ! (\leave_i) \cdot \Interrupt_i[\triv]
    \\
    \Interrupt_i[\notYet] & = &
    \ok ! (\leave_i) \cdot \Interrupt_i[\notYet] +
    \trap (\notYet_i) \cdot \Without_i[\triv]
    \\
    \Interrupt_i[\request] & = &
    \trap (\request_i) \cdot \With_i[\triv]
    \\
    \With_i[\triv] & = &
    \at ? (\AtDoor_i) \cdot \With_i[\done] +
    \ok ! (\explain_i) \cdot \With_i[\triv] +
    {} \\  & &
    \ok ! (\thank_i) \cdot \With_i[\triv]
    \\
    \With_i[\done] & = &
    \trap (\done_i) \cdot \Without_i[\triv]
  \end{array}
  \def\arraystretch{1.0}
\end{displaymath}

\noindent
The corresponding LTS
of the specification $\PAClient_i(\CS)$ of $\Client_i(\CS)$ is given in
Figure~\ref{basicACPspecifications}b.

As above, process $\PAClient_i(\CS)$ is defined in close
correspondence to $\Without_i[\triv]$ being starting state in
Figure~\ref{visualsyntax2}b. The $\ok!(.)$-actions provide the
permission answers to requests from $\PAClient_i$ to take a detailed
step. The $\at?(.)$-actions ask for state information relevant for
deciding a next, smaller trap has been entered. The $\trap(.)$-actions
correspond to a phase change, so they synchronize with a particular
conductor step.

The final component of the Paradigm model that needs to be translated
into ACP is the non-determinis\-tic server $\NDetServer$. In fact, the
STD of the server as given in Figure~\ref{conductorND} exactly
corresponds to its recursive specification; we only rename each
transition label~$\ell$ from Figure~\ref{conductorND} into
$\man(\ell)$ to stay consistent with the general translation as
defined in~\cite{scp10}, for instance $\permit_i$ is renamed into
$\man(\permit_i)$ in the PA specification. There is neither any
$\ok(.)$ action nor any $\at(.)$ action added here. This component
plays the conductor role in the protocol and as such it is represented
only by its detailed behaviour (detailed STD). Therefore, no vertical
constraints are imposed on its detailed behaviour.

\begin{displaymath}
  \fontsize{9pt}{0pt}
  \def\arraystretch{1.1}
  \begin{array}{@{}rcl}
    \PANDetServer & = & \Idle
    \\
    \Idle & = &
    \man (\checki_1) \cdot \NDChecking_1
    \, + \cdots + \,
    \man (\checki_n) \cdot \NDChecking_n
    \\
    \NDChecking_i & = & \man (\permit_i) \cdot \NDHelping_i +
    \man (\refuse_i) \cdot \Idle
    \\
    \NDHelping_i &  = & \man (\continue_i) \cdot \Idle
  \end{array}
  \def\arraystretch{1.0}
\end{displaymath}

\noindent
For the communication function~`$|$' we put $\at!(s) \mycom \at?(s) =
\tau$ for `states' $s = \Out_i , \Waiting_i , \AtDoor_i $, and
$\ok?(a) \mycom \ok!(a) = \ok(a)$, for actions $a = \enter_i
,\explain_i , \thank_i , \leave_i$. Note, ACP allows to keep the
result of the synchronization of $\ok?(a)$ and~$\ok!(a)$
observable, here as the action $\ok(a)$, for suitable~$a$. We exploit
this feature below to express system properties, since the
synchronization actions $\ok(a)$ describe detailed steps taken by
clients. E.g., observing $\ok(\enter_i)$ indicates a service request
made by $\Client_i$. On the contrary, synchronization of $\at!()$ and
$\at?()$ is only used to update the information of the current detailed
state. The resulting actions are internal to the component and not
needed in any further analysis. Therefore, we safely use~$\tau$ for
the synchronization of $\at?()$ and $\at!()$.

Finally, we need to encode the coordination captured by the
consistency rules. For example, consistency rule~(\ref{cs01}) couples
a detailed $\checki_i$ step of the $\Server$, being the conductor of the
$\CS$~protocol, to the global $\triv$ step of $\Client_i$, being a
participant in the $\CS$~protocol. The net result is a state transfer,
i.e.\ a transition $\Idle \checkiarrow \NDChecking_i$ for the server,
and a phase transfer, i.e.\ a transition $\Without \trivarrow
\Interrupt$ in the global STD for the $i$-th client. Similar
correspondences apply to the other consistency rules. Therefore, we
put
\begin{displaymath}
  \fontsize{9pt}{0pt}
  \def\arraystretch{1.1}
  \begin{array}{r@{\,}c@{\,}l@{\;}c@{\;}l}
    \man (\checki_i) & | & \trap (\triv_i)
    & = & \checki_i
    \\
    \man (\permit_i) & | & \trap (\request_i)
    & = & \permit_i
  \end{array}
  \qquad
  \begin{array}{r@{\,}c@{\,}l@{\;}c@{\;}l}
    \man (\refuse_i) & | & \trap (\notYet_i)
    & = & \refuse_i
    \\
    \man (\continue_i) & | & \trap (\done_i)
    & = & \continue_i
  \end{array}
  \def\arraystretch{1.0}
\end{displaymath}
As usual, unmatched synchronization actions will be blocked to enforce
communication.  We collect those in the set $A = \lc \man, \trap,
\at?, \at!, \ok?, \ok! \rc$. Finally, the process for the
collaboration of the server and the $n$~clients is given by
\begin{equation}
  \def\arraystretch{1.3}
  \begin{array}[b]{@{}l}
    \partial_A( \, \PAClient_1 \merge \PAClient_1(\CS)
    \merge \ldots
    \merge
    \PAClient_n \merge \PAClient_n (\CS) \merge \PANDetServer \, )
  \label{NDsystem}
  \end{array}
  \def\arraystretch{1.3}
\end{equation}

\noindent
The next section is concerned with the intertwining of detailed and
the global behavior, and possible ways to reduce the component
specification by abstracting away from specific detailed
activities. The process algebraic specification of our running
client-server example will be used below to establish relations
between Paradigm models before and after reduction. Therefore, it
comes in handy to represent the overall behaviour of the $\Client$
component as the parallel composition of its detailed and global
behaviour. To this end, we denote the set of states of the detailed
process $\PAClient$ by $\StatesD = \lc \Out, \Waiting, \Busy, \AtDoor
\rc$, the set of labels of its transitions by of detailed $\LabelsD =
\lc \enter, \explain, \thank, \leave \rc$ and we put

\begin{displaymath}
    \AT = \lc \at!(s), \, \at?(s) \mid s\in \StatesD \rc
    \qquad
    \OK  = \lc \ok!(a), \, \ok?(a) \mid a \in \LabelsD \rc
\end{displaymath}
and define $H = \AT \cup \OK$.  Then the process combining detailed
behaviour of $\PAClient$ and global behaviour of $\PAClient(\CS)$ can
be expressed as $\PAClient(\DG)$, with $\DG$ referring to `detailed'
and `global', given by
\begin{displaymath}
  \PAClient(\DG) = \partial_{H}(\PAClient \merge \PAClient(\CS))
\end{displaymath}
Figure~\ref{parallel_client_and_clientCS} shows the behavior of
$\PAClient(\DG)$ graphically.  The process describes the way the
detailed and global behaviors occur and constrain each other.

\begin{figure}[bht!]
  \centering
  \scalebox{0.50}{\input{Figures/clientDG.pstex_t}}
  \caption{Process $\PAClient(\DG)$}
  \label{parallel_client_and_clientCS}
\end{figure}

On the one hand, steps taken at the detailed level influence the
current phase at the global level, and therefore allows and forbids
certain phase transitions at the global level. The global process and
its transitions, are `navigated' by the activities executed at the
detailed level. For instance, the effect of the detailed transition
$\ok(\enter)$ is described with the appearance of two $\triv$
transitions. One of them captures the scenario in which the client has
not yet required any service, which means that $\enter$ has not been
taken yet at the detailed level, although the server (conductor) may
offer service. It can be observed that this transition is followed by
the phase transition $\notYet$ which brings the process back to the
initial state. We can also observe that as soon as the detailed
transition $\enter$ is taken, the enabled $\triv$ transition differs
from the previous one.

On the other hand, from $\PAClient(\DG)$ we can observe how each
phase, i.e.\ a global state, constrains the steps that can be taken
locally. Moreover, it is specified exactly how a trap that is reached
blocks any detailed transitions, just as expected. For instance, we
see that the action $\ok(\leave)$ on top of
Figure~\ref{parallel_client_and_clientCS} cannot be executed before
the phase is changed, i.e.\ a step from $\With[\done]$ to
$\Without[\triv]$ via the global transition $\trap(\done)$. Note that
such details, which are explicit and easily observable from the ACP
specification of the composition $\PAClient(\DG)$, cannot be directly
detected in the Paradigm model.

Once systems are modeled algebraically, their behaviours can be
compared.  Comparison is typically done by means of equivalence
relations, chosen appropriately to preserve certain properties. Since
we aim at the mCRL2 toolset for tool support, we choose for branching
bisimulation~\cite{GW96} as the equivalence relation we apply. Indeed,
branching bisimulation is the strongest in the spectrum of behavioural
equivalence relations, but yet weak enough to identify sufficiently
many systems. Below we adapt the definition from~\cite{GW96}
(originally defined on labelled transition systems) to STDs with
uniquely indicated initial states. In fact, labelled transition
systems (LTS), as a (visual) representation of process algebraic
specifications, can be seen also as STDs. Therefore, in the sequel we
do not make explicit distinction between LTSs and STDs.

\blankline

\begin{definition}
  \label{branching_bisimulation}
  For two STDs $Z = <\ST, \AC, \TS>$, $Z' = <\ST', \AC', \TS'>$ a
  symmetric relation $R \subseteq \ST \times \ST'$ is called a
  branching bisimulation relation if for all $s \in \ST$ and $t \in
  \ST'$ such that $R(s,t)$, the following condition is met: if $s
  \xrightarrow{a} s'$ in~$Z$, for some $a\in\AC\cup\{\tau\}$, then
  either $a = \tau$ and $R(s',t)$, or for some $n \geq 0$, there exist
  $t_1, \ldots, t_n$ and~$t'$ in~$\ST'$ such that $t
  \xrightarrow{\tau} t_1 \xrightarrow{\tau} \ldots \xrightarrow{\tau}
  t_{n} \xrightarrow{a} t'$ in~$Z'$, $R(s,t_1), \ldots, R(s,t_n)$
  and~$R(s',t')$.
\end{definition}

\noindent
For two STDs $Z$ and~$Z'$, two states $s \in Z$ and~$t \in Z'$ are
called branching bisimilar, notation $s\bbisim t$, if there exists a
branching bisimulation relation~$R$ for $Z$ and~$Z'$ such
that~$R(s,t)$. The STDs $Z$ and $Z'$ are branching bisimilar, notation
$Z\bbisim Z'$ if their initial states are branching bisimilar.


\section{Reduction of the client processes}
\label{sec:4}

\noindent
In Section~\ref{sec:3} we explained how ACP specifications are
obtained from the detailed and global client STDs, and how ACP's
communication function captures synchronization of detailed and global
steps, guaranteeing consistent dynamics at both levels.  Based on the
complete client component we are able to make several observations
regarding the Paradigm approach to separate the detailed from the
global behaviour.

\subsection{First-reduce then-compose}
\label{subsec:4-1}

The global STD of a component is an abstract representation of its
detailed STD. It represents the part of the behaviour of the component
that is essential for the interaction within a given collaboration. In
general, for the global behaviour not all local transitions are
relevant, most are not influencing the overall coordination at
all. Although not always easy to isolate, in actual full-fledged
systems only a restricted part of the whole system provides a specific
functionality. In such a situation, from a modeling perspective it is
clarifying to abstract away the irrelevant part and to concentrate on
a reduced detailed behaviour containing the relevant interaction. As a
consequence, dealing with models that are purposely made concise 
becomes simpler, more feasible and less error-prone.

In the previous sections, we have made a Paradigm model out of the
components: detailed client STDs, their global STDs and the server
STD\@. Moreover, we have presented their translations into process
algebraic specifications. The overall behaviour of the client-server
system is obtained by putting the components involved in parallel and
make them interact. In this section we show that we can achieve the
same total behaviour of the client-server system by \emph{first
  reducing} the client components and \emph{then composing} the
reduced versions afterwards with other components of the
system. Reduction is directly applied on the original Paradigm client
model, by abstracting away irrelevant states and local transitions.

It is intuitively clear that the global behaviour alone is not
branching bisimilar to the overall client behaviour
$\PAClient(\DG)$. This is because some local steps change the further
global behaviour. As a consequence, such local transitions can be
detected at the global level. Extending terminology going back
to~\cite{GW96}, we call these transitions \emph{globally
  non-inert}. Similarly, a local transition is referred to as
\emph{globally inert} if it cannot be observed, explicitly or
implicitly, at the global level. More specifically, it can be detected
whether local action $\enter$ has been taken or not by observing
whether the global transition $\notYet$ or global transition
$\request$ follows after global step $\triv$. Putting it differently,
the transition labeled $\enter$ makes the difference for phase
$\Interrupt$ of residing in trap $\notYet$ or in trap $\request$, as
can be seen in Figure~\ref{visualsyntax1}. Thus, the local transition
$\enter$ is not globally inert. In a similar manner, the local action
$\thank$ is not globally inert as it enables --and so it can be
detected-- the execution of the global action $\done$. In terms of the
partition, in phase $\With$ the action $\thank$ enters the
trap~$\done$. On the other hand, again referring to the phases of
$\Client(\CS)$ in Figure~\ref{visualsyntax1}b, we see that the action
$\leave$ is in each phase either within a trap (phases $\Without$ and
$\Interrupt$) or not possible at all (phase $\With$ is missing the
target state $\Out$). Likewise, the action $\explain$ is not possible
(phases $\Without$ and $\Interrupt$ are missing state $\Busy$) or
doesn't change the trap information (in phase $\With$ the transition
doesn't enter the trap $\done$).

\begin{definition}
  Let a Paradigm model be given. A detailed transition $x \aarrow x'$
  of a participant of a protocol is called \emph{globally inert} with
  respect to its partition~$\pi = \lc (S_i,T_i) \mid i \in I \rc$ if
  for all traps~$t$ in~$T_i$ it holds that $x \in t \iff x' \in t$
  whenever both $x,x' \in S_i$, $i \in I$. An action $a$ is called
  globally inert for a participant of a protocol with respect to a
  partition, if all $a$-labeled transitions are.
\end{definition}

Using the notion of detailed transitions being globally inert or
non-inert, we can reduce the detailed STD of the client. After
renaming all globally inert transitions into $\tau$, we can identify
branching bisimilar states. The resulting quotient STD for the client
carries the behaviour that is necessary and sufficient for the global
STD to interact with the other components, including the conductor of
the collaboration. The composition of the process algebraic
specifications of the quotient STD and the global $\PAClient(\CS)$
behaves exactly (up to branching bisimulation) as the behaviour of the
composition of the original detailed and global STDs together as
represented by $\PAClient(\DG)$.  By congruence, composition of either
of these systems with the other clients and the server leads, modulo
branching bisimulation equivalence, to the same behaviour. This is
summarized by the next result, where $\tau_I$, for a set of
labels~$I$, represents the hiding of the actions in~$I$ from~$P$ by
renaming them into~$\tau$, and $\partial_J(P)$, for a set of
labels~$J$, is the encapsulation of the actions of~$J$ from~$P$ by
blocking and transition for~$P$ with label in~$J$.

\bigskip

\begin{lem}
  \label{lemma_reduce_compose}
  Let $G \subseteq \LabelsD$
  be a subset of globally inert actions.
  Then it holds for the induced quotient $\QClient$ of
  $\Client$ that
  \begin{enumerate}
  \item [(i)] $\QClient \bbisim \tau_G(\Client)$, and
  \item [(ii)] $\partial_{H} ( \, \PAQClient \merge \PAClient(\CS) \,
    ) \bbisim \tau_{\OK(G)}( \, \PAClient(\DG) \, )$,
    where $\OK(G) = \lc \ok(a) \mid a \in G \rc$.
  \end{enumerate}
\end{lem}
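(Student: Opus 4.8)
The plan is to prove the two items in sequence, using~(i) as the detailed ingredient for~(ii). Throughout I rely on branching bisimulation being a congruence for parallel composition~$\merge$, encapsulation~$\partial_H$ and abstraction~$\tau_I$, so that an equivalence established for a single client component transfers to the full collaboration; within the lemma itself, however, the core of~(ii) will be a direct branching bisimulation between two composed single-client systems.

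For item~(i), recall that $\QClient$ arises from $\Client$ by first relabelling every globally inert transition to~$\tau$ --- which is exactly the process $\tau_G(\Client)$ --- and then identifying branching bisimilar states. Writing $[x]$ for the class of a detailed state~$x$, I would take the relation $R = \lc (x,[x]) \mid x \in \StatesD \rc$ and verify the transfer condition of Definition~\ref{branching_bisimulation} in both directions. This is the standard fact that a transition system is branching bisimilar to its quotient modulo (maximal) branching bisimulation; the only point specific to our setting is that the relabelling $\tau_G$ is what turns the inert steps into genuine $\tau$-steps, so that the classes $[x]$ --- and hence $\QClient$ --- are well defined. Item~(i) then follows immediately.

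For item~(ii), I would first simplify the right-hand side. Since the hidden labels $\OK(G) = \lc \ok(a) \mid a \in G \rc$ are communication \emph{results} and are therefore disjoint from $H = \AT \cup \OK$, abstraction commutes with encapsulation and $\tau_{\OK(G)}(\PAClient(\DG)) = \partial_H(\tau_{\OK(G)}(\PAClient \merge \PAClient(\CS)))$. The key structural observation is then that, for an inert action $a \in G$, the matching global action $\ok!(a)$ of $\PAClient(\CS)$ is always a self-loop: by the definition of globally inert the underlying step $x \aarrow x'$ stays inside the current phase and inside exactly the same traps, so neither the phase nor the trap component of the global state changes when $\ok(a)$ fires. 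Consequently a hidden inert step advances only the detailed component while leaving the global state fixed. I would then exhibit a branching bisimulation between $\partial_H(\PAQClient \merge \PAClient(\CS))$ and $\tau_{\OK(G)}(\PAClient(\DG))$ directly, relating a state $(q,g)$ of the reduced system to a state $(y,g)$ of the full hidden system exactly when $q = [y]$ and the global component~$g$ coincides. Under this relation the non-inert $\ok$-steps, the $\trap$-labelled phase transfers and the $\tau$'s coming from $\at$-synchronisation are matched one-to-one, while each hidden inert $\tau$ on the full side is matched either by staying put (when it remains inside the class~$[y]$) or by the corresponding surviving $\tau$ of $\PAQClient$; item~(i) guarantees that these surviving $\tau$-steps of the quotient are exactly mirrored in $\tau_G(\Client)$, which is what makes the matching in the reverse direction go through. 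Congruence for $\merge$ and $\partial_H$ then lifts the statement to the system~(\ref{NDsystem}) with the remaining clients and the server.

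The main obstacle is the verification of the transfer conditions in this direct bisimulation, and in particular showing that a hidden inert step can never be observed on the reduced side. This is precisely where the inertness hypothesis $x \in t \iff x' \in t$ for all traps~$t$ is indispensable: it guarantees that the trap bookkeeping performed by the $\at?(\cdot)$ actions of $\PAClient(\CS)$ cannot tell $x$ apart from $x'$, so that the shared global component~$g$ genuinely agrees on both sides, and that no $\ok(a)$ with $a \in G$ can enable or disable a subsequent $\trap$-transition. Care is also needed to match the $\at!$/$\at?$ synchronisations that update trap information, since the quotient states $[x]$ merge detailed states carrying different state names; here one must check that within a single phase the global process only ever probes state information that is invariant across the merged class. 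The remaining cases are routine.
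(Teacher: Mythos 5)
Your overall strategy coincides with the paper's own: part~(i) is the standard quotient-modulo-branching-bisimilarity argument, and part~(ii) is proved by exhibiting a direct branching bisimulation relating a state $(q,g)$ of $\partial_H(\PAQClient \merge \PAClient(\CS))$ to a state $(y,g)$ of $\tau_{\OK(G)}(\PAClient(\DG))$ exactly when $q=[y]$ and the global components agree. The only difference in route is that you argue generically for an arbitrary set $G$ of globally inert actions, whereas the paper fixes the maximal set $G = \lc \explain, \leave \rc$, builds the concrete two-state quotient with classes $P = \lc \Out, \AtDoor \rc$ and $Q = \lc \Waiting, \Busy \rc$, and reads both bisimulations off the pictured LTSs; in that respect your version is closer to the lemma as stated.

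There is, however, one genuine gap, precisely at the point you flag as needing ``care'' but leave unresolved. After quotienting, $\PAQClient$ offers state information with the \emph{class names} as arguments, i.e.\ $\at!(P)$ and $\at!(Q)$, while $\PAClient(\CS)$ still requests $\at?(\Out)$, $\at?(\Waiting)$, $\at?(\AtDoor)$, $\at?(\Busy)$. Under the communication function of Section~\ref{sec:3} these cannot synchronise at all, so in $\partial_{H}(\PAQClient \merge \PAClient(\CS))$ the global process never updates its trap information: after the first phase transfer the composition gets stuck in phase $\Interrupt[\triv]$ (no trap of $\Interrupt$ is ever entered, hence no further $\trap$-transition is possible), and statement~(ii) is then simply false for the process you are analysing. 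The missing idea --- and the one substantive ingredient of the paper's proof of~(ii) --- is to \emph{extend the communication function} to the quotient, putting $\at!(P) \mycom \at?(\Out) = \tau$, $\at!(P) \mycom \at?(\AtDoor) = \tau$, $\at!(Q) \mycom \at?(\Waiting) = \tau$, $\at!(Q) \mycom \at?(\Busy) = \tau$, and enlarging $\AT$ (hence $H$) accordingly; only after this do the ``$\tau$'s coming from $\at$-synchronisation'' that your relation must match one-to-one even exist on the reduced side. Relatedly, your soundness criterion for this matching is not quite what holds: it is false that ``within a single phase the global process only ever probes state information that is invariant across the merged class'' --- in phase $\With$ the probe $\at?(\AtDoor)$ is answered by $\at!(P)$ although $\Out \in P$ is not a state of $\With$ at all. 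What makes the extended synchronisation harmless there is a reachability invariant (the current detailed state always lies in the current phase, so a configuration pairing $\Out$ with phase $\With$ never occurs), combined with the trap-invariance given by inertness for those members of the class that do lie in the phase.
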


  \begin{figure}[bht!]
    \begin{center}
      \centerline{\input{Figures/clienthidden.pstex_t}}
      \caption{(a)~process $\tau_G(\Client)$ and related states,
        (b)~quotient STD $\QClient$ and (c)~$\PAQClient$.}
      \halflineup
      \label{client_partitioned}
    \end{center}
  \end{figure}

\begin{proof}
  We consider the case of the maximal set of local actions that are
  globally inert, i.e.\ for $G = \lc \explain, \leave \rc$. Split the
  set of states $\StatesD$ of the detailed STD into $P = \lc \Out, \,
  \AtDoor \rc$ and $Q= \lc \Waiting, \, \Busy \rc$.  Let $\QClient$ be
  the induced quotient STD, the STD obtained from $\Client$ by
  identifying the states $\Out$ and $\AtDoor$ as well as the states
  $\Waiting$ and $\Busy$. The processes $\QClient$ and
  $\tau_G(\Client)$ are shown in
  Figure~\ref{client_partitioned}ab.
    A branching bisimulation between $\QClient$ and
  $\tau_G(\Client)$ can be immediately established, which proves the
  first part of the lemma.

  In order to prove the second part of the lemma, we first translate
  $\QClient$ into the process algebraic specification $\PAQClient$
  whose STD is shown in Figure~\ref{client_partitioned}c.  In order to
  compute the composition of $\PAQClient$ and $\PAClient(\CS)$ the
  communication function has to be adapted to $\PAQClient$. For the
  $\PAQClient$ process $\Out$ and $\AtDoor$ are identified into the
  $P$. Similar for $\Waiting$, $\Busy$, now represented by~$Q$. Thus,
  a detailed $\PAQClient$ communication intention conveying
  `\emph{at~$P$}' or `\emph{at~$Q$}' updates the global process about
  the current local state. Hence, we extend the communication function
  with $\at!(P) \mycom \at?(\Out) = \tau$, $\at!(P) \mycom
  \at?(\AtDoor) = \tau$, $\at!(Q) \mycom \at?(\Waiting) = \tau$ and
  $\at!(Q) \mycom \at?(\Busy) = \tau$.  Now we consider the process
  $\partial_{H} ( \, \PAQClient \merge \PAClient(\CS) \, )$ with $H =
  \AT \cup \OK$ as defined in Section~\ref{sec:3}, with $\AT$ extended
  accordingly. The composition is shown in
  Figure~\ref{reduce_then_compose}a, the process $\tau_{\OK(G)}( \,
  \PAClient(\DG) \, )$ is depicted in
  Figure~\ref{reduce_then_compose}b. It is straightforward to
  establish a branching bisimulation between these two processes.
\end{proof}

\begin{figure}[bht!]
  \centerline{\input{Figures/clientDGfromreduced.pstex_t}\qquad\qquad\mbox{}}
    \caption{Branching bisimilar processes: (a)~$\partial_{H} ( \,
      \PAQClient \merge \PAClient(\CS))$ (b)~process
      $\tau_{\OK(G)}(\PAClient(\DG))$.}
    \label{reduce_then_compose}
\end{figure}

\noindent
State names of
$\tau_{\OK(G)}(\, \PAClient(\DG) \, )$ have been suppressed in
Figure~\ref{reduce_then_compose}b for readability. Note that the
number of states in $\tau_G( \, \PAClient(\DG) \, )$ is~13, while the
\emph{first-reduce then-compose} approach with $\PAQClient$ and
$\PAClient(\CS)$ generates a process with 9~states only. See
table~\ref{experiments_results} below for more numerical results.

\begin{figure}[hbt!]
  \begin{center}
    \input{Figures/allwrongtogether.pstex_t}
    \caption{(a)~adapted quotient process $\QClient$, (b)~composition
      of new $\PAQClient$ and $\PAClient(\CS)$.}
    \label{wrong_partition}
  \end{center}
\end{figure}

\blankline

\noindent
It is obvious that not every choice of actions at the detailed level
has the property of Lemma~\ref{lemma_reduce_compose}. For example,
selecting the set of actions $G\mkern1mu{}' = \lc \enter, \, \thank \rc$, yields
a split-up into $\lc \Out, \, \Waiting \rc$ and $\lc \Busy, \, \AtDoor
\rc$ and another reduction, depicted in
Figure~\ref{wrong_partition}a. However, this reduction is not a proper
one as the induced composition of the reduced detailed and the global
behaviour in Figure~\ref{wrong_partition} is not branching bisimilar
with the original composition $\tau_{\OK(G\mkern1mu{}')}(\PAClient(\DG))$.

\begin{figure}[hbt!]
  \begin{center}
    \includegraphics[scale=0.50]{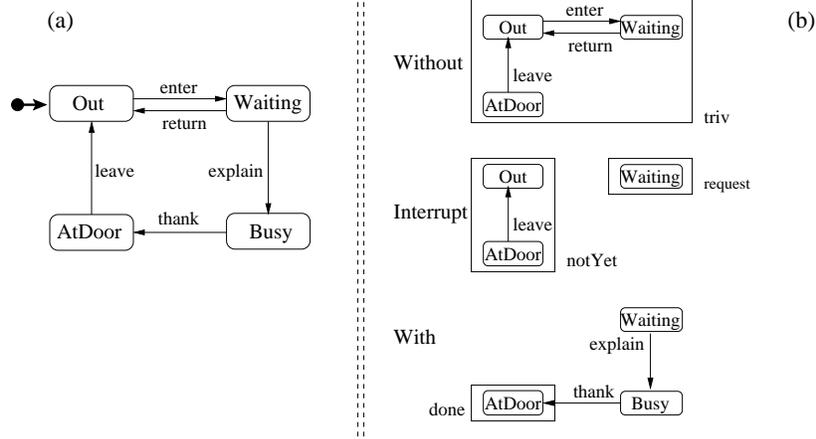}
  \end{center}
  \caption{Modified client: (a)~STD of~$\RClient$, (b)~phase and trap
    constraints.}  \label{Paradigm_client2}
\end{figure}

It is instructive to consider a slightly different client. Now we
assume that the client may decide to draw back the service request and
return back to the initial state $\Out$. The detailed STD and the
global STD shown in Figure~\ref{Paradigm_client2} differ from the
model in Figure~\ref{visualsyntax1} only in the $\return$ transition.
If we apply the same reasoning of Lemma~\ref{lemma_reduce_compose} to
this model of a client, we observe that the $\return$ transition does
not change the situation regarding the reduction of the local
behaviour. Again, the $\enter$ transition is not globally inert, for
the same reasons as in the previous model. Similarly, $\return$ is
also not globally inert. Still, the original quotient from
Lemma~\ref{lemma_reduce_compose} based on the inert actions $\explain$
and $\leave$ yields a proper reduction. See
Figure~\ref{reduce_compose_example2}.

\begin{figure}[bht]
  \begin{center}
    \centerline{\makebox[1.25cm]{}
      \input{Figures/clientDGfromreduced_ex2a.pstex_t}}
    \caption{Branching bisimilar
      processes:~(a)~$\partial_{H}(\RClientQuotient \merge
      \PARClient(\CS))$, (b)~$\tau_{\OK(G)}(\PARClient(\DG))$.}
    \label{reduce_compose_example2}
  \end{center}
\end{figure}

\begin{figure}[bht]
  \begin{center}
    \epsfig{file=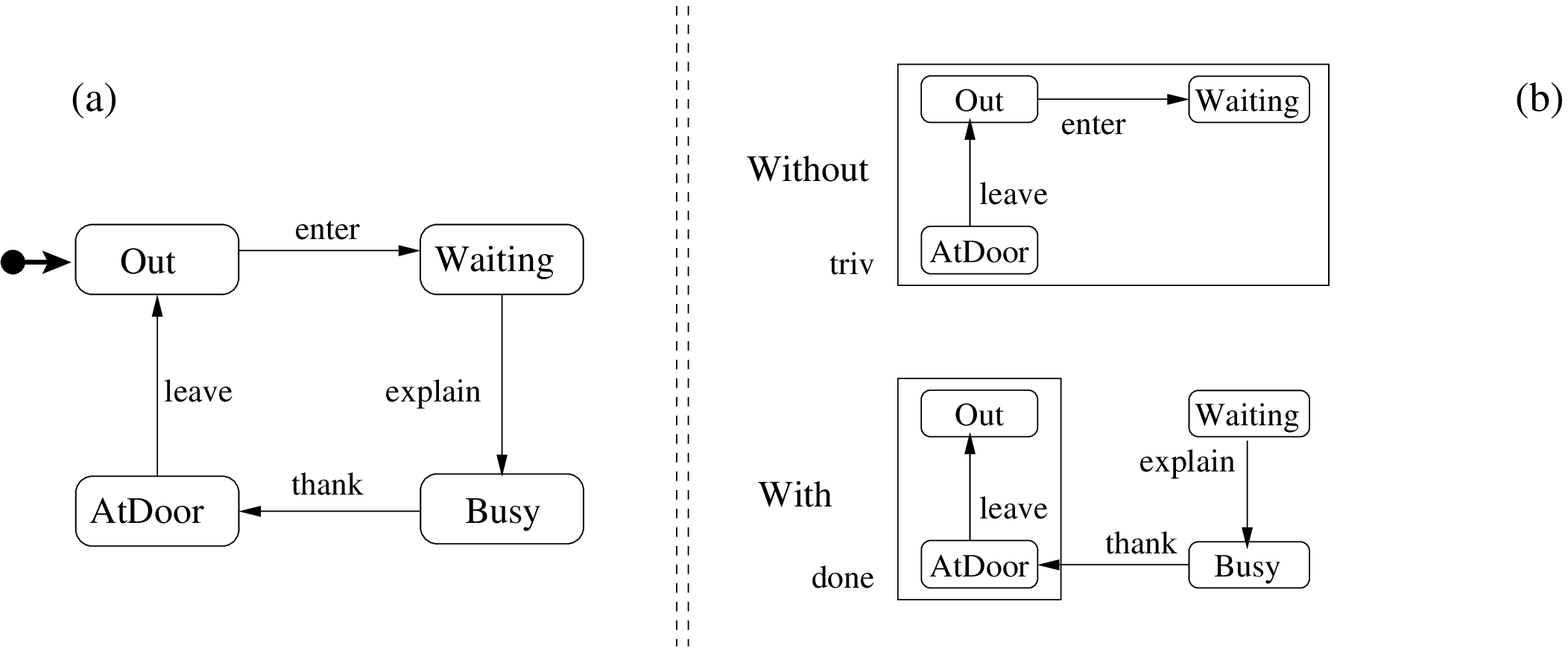,scale=0.40}
    \caption{The Paradigm model of $\SClient$.}
    \label{client3}
  \end{center}
\end{figure}

\blankline

\noindent
The last example we consider as a further variation, named~$\SClient$,
is presented in Figure~\ref{client3}. The only change is now in the
global STD $\SClient(\CS)$. The client is provided service
unconditionally, i.e.\ without interruption, even without needing
it. But, if it doesn't need it the client is handled as if it does not
need service \emph{any longer}. The simplified global behaviour, with
less phases and less traps, imposes less constraints on the detailed
behaviour. Thus, the relation between the detailed and the global
behaviour is rather loose. In Figure~\ref{client3DG} the behaviour of
process $\PASClient(\CS)$ and the parallel composition
$\PASClient(\DG)$ are graphically represented. 
In order to show this formally, we again apply the
\emph{first-reduce then-compose} approach along the lines of
Lemma~\ref{lemma_reduce_compose} by taking the trivial split-up
of~$\StatesD$ along all detailed actions in $\LabelsD$. Thus, we
identify all local actions in $G'' =\LabelsD$ as globally inert. The
resulting quotient STD of $\SClientQuotient$ and its process algebraic
translation are shown in Figure~\ref{client3Pi}bc.  The composition of
the reduced detailed behaviour of $\SClient$ with its global behaviour
has now 3 states as shown in Figure~\ref{client3Pi}d.  A branching
bisimulation between this process and the corresponding process
$\tau_{G''}( \, \PASClient(\DG) \, )$ can be established easily.

\begin{figure}[bht]
  \begin{center}
    \input{Figures/client3All.pstex_t}
    \caption{Processes $\PASClient(\CS)$ and $\PASClient(\DG)$.}
    \label{client3DG}
  \end{center}
\end{figure}

\begin{figure}[thb]
  \begin{center}
    \input{Figures/client3hidden.pstex_t}
    \caption{%
      (a)~$\tau_{G''}(\SClient)$,
      (b)~$\PASClient$,
      (c)~$\PASClientQuotient$,
      (d)~composition of~$\PASClientQuotient$ and~$\PASClient(\CS)$.}
    \label{client3Pi}
  \end{center}
\end{figure}

\blankline

\noindent
In order to investigate the effect of the reduction on a larger scale,
we have analyzed the client-server system using the \texttt{mCRL2}
toolset~\cite{JFGDagstuhl} and compared the implementation of the
system using either the original $\PAClient$ components or their
reduced versions $\PAQClient$.  The translation of ACP-based
specifications of the $n$~clients $\PAClient_i$, the global
$\PAClient_i(\CS)$ and the server $\NDetServer$ into the input
language of the \texttt{mCRL2} toolset, which we use for our model
analysis, is largely straightforward (see also~\cite{scp10}).  Indeed,
the application of the \emph{first-reduce then-compose} principle
yields a significant decrease in the size of the state space in a
number of cases. The results are collected in
Table~\ref{experiments_results}.

\begin{table}[htb]
\begin{center}
\begin{tabular}{|c|rr@{\ \quad}|rr@{\ \quad}|}
\hline
n\raisebox{-5pt}{\rule{0pt}{20pt}} &
\multicolumn{2}{|c|}{with $\PAClient$} &
\multicolumn{2}{|c|}{with $\PAQClient$}
\\
\hline\hline
  & states & \multicolumn{1}{c|}{transitions}
  & states & \multicolumn{1}{c|}{transitions}
  \\
\cline{2-5}
 & & & &
 \\[-0.3cm]

2   & 69     & 142       &  32    & 54
\\
3   & 297   & 819        & 92     & 204
\\
4   & 1161  & 3996      & 240    & 656
\\
5   & 4293  & 17685     & 592   & 1920
\\
6\raisebox{-5pt}{\rule{0pt}{10pt}}
    & 15309 & 73386     & 1408  & 5280
\\
10\raisebox{-5pt}{\rule{0pt}{10pt}}
    & -- & --    & 36863  & 212480
    \\
\hline
\multicolumn{5}{c}{%
  \rule{0pt}{20pt} \small
  (no result for $\PAClient$ with $n{=}10$ within 24 hours)}
\end{tabular}
\caption{Effect of the \emph{first-reduce then-compose} approach.}
\label{experiments_results}
\end{center}
\end{table}

\subsection{Extracting detailed behaviour}
\label{subsec:4-2}

Intuitively it is clear that in the case of the client-server example
the global behaviour does not change or influence the local behaviour.
In fact, if in the total client behaviour $\PAClient(\DG)$ we hide the
actions $\trap(\cdot)$ from the set~$E$ performed by the global
process ($E$ for external), we obtain a process which is branching
bisimilar to the detailed behaviour $\Client$. This is expressed by
the following lemma.

\begin{lem}
  $\Client \bbisim \tau_E( \, \PAClient(\DG) \, )$.
\end{lem}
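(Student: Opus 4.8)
The plan is to exhibit an explicit branching bisimulation between $\Client$ and $\tau_E(\PAClient(\DG))$, grouping the states of the latter according to their detailed component. First I would make the state space of $\PAClient(\DG)$ explicit: every reachable state is a pair $(x,g)$ of a detailed state $x \in \StatesD$ of $\PAClient$ together with a global state $g$ of $\PAClient(\CS)$, the reachable pairs being exactly the thirteen configurations of Figure~\ref{parallel_client_and_clientCS}. Identifying, as throughout the paper, the observable synchronization result $\ok(a)$ with the detailed action $a$, I would then define the relation $R$ that relates a detailed state $x$ of $\Client$ to every reachable pair $(x,g)$ of $\PAClient(\DG)$ whose first component equals $x$.

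The heart of the argument is a structural observation about the transitions of $\tau_E(\PAClient(\DG))$, which I would read off the communication function. After hiding $E$, the transitions split into two kinds. The silent ones have two origins: the synchronizations $\at!(s) \mycom \at?(s) = \tau$ and the now-hidden phase transfers $\trap(\cdot)$. A glance at the specifications shows that every $\at!(s)$ is a self-loop of $\PAClient$ on the detailed side, while $\trap(\cdot)$ is performed by $\PAClient(\CS)$ alone; in either case the detailed component of the pair is left unchanged and only the current phase or trap is updated. The visible transitions are exactly the synchronizations $\ok?(a) \mycom \ok!(a) = \ok(a)$, and since every $\ok!(a)$ summand of $\PAClient(\CS)$ is a self-loop on the global state, such a step changes the detailed component precisely as the detailed transition $x \aarrow x'$ of $\Client$ does. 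Hence all silent steps stay inside one $R$-class, whereas every visible $\ok(a)$-step mirrors the unique $\Client$-transition labelled $a$.

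With these two facts one half of the conditions is immediate: a silent move of $\tau_E(\PAClient(\DG))$ keeps the pair in its class and is absorbed by the ``$a=\tau$ and $R(s',t)$'' clause, and a visible $\ok(a)$-move from a pair $(x,g)$ is matched by the single transition $x \aarrow x'$ of $\Client$ (with $n=0$, as $\Client$ has no silent steps). For the converse half I would check, for each of the four detailed states and the unique detailed action it offers, that from every reachable pair with that detailed component one can reach, by silent steps preserving the detailed component, a pair in which the corresponding $\ok(a)$ is enabled --- e.g.\ from $(\AtDoor, \With[\triv])$ the two silent steps (the $\at$-synchronization to $(\AtDoor, \With[\done])$, then the hidden $\trap(\done)$ to $(\AtDoor, \Without[\triv])$) reach a phase that permits $\ok(\leave)$. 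Finally the initial states $\Out$ and $(\Out, \Without[\triv])$ are $R$-related, yielding $\Client \bbisim \tau_E(\PAClient(\DG))$.

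The main obstacle is exactly this converse direction: it is where the argument must use the concrete shape of the phases and traps rather than generic reasoning. One has to rule out that the role $\PAClient(\CS)$, while silently cycling through phase transfers, ever gets stuck in a configuration that permanently denies the detailed step offered by $\Client$ --- for instance that the silent cycle $\Without[\triv]\!-\!\Interrupt[\triv]\!-\!\Interrupt[\notYet]\!-\!\Without[\triv]$, traversed while the detailed component stays $\Out$, always returns to the phase $\Without[\triv]$ granting $\ok(\enter)$. Because the role is finite and each such silent cycle revisits a permitting phase, this holds by a finite case inspection over the thirteen states; but it is the step that carries the real content of the lemma.
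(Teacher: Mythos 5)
Your proof is correct and follows essentially the same route as the paper: the paper likewise exhibits a concrete branching bisimulation on the thirteen reachable states of $\tau_E(\,\PAClient(\DG)\,)$, relating each state to the $\Client$ state given by its detailed component (shown there by dotted lines connecting related states in a figure). Your structural observations --- $\at!$-steps being detailed self-loops, $\trap(\cdot)$ steps being global-only, and the silent reachability of a phase permitting the pending detailed action --- just make explicit the finite case inspection that the paper delegates to its figures.
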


\begin{proof}
  We start from the process $\PAClient(\DG)$ as shown in
  Figure~\ref{parallel_client_and_clientCS}. After hiding the actions
  in~$E$, i.e.\ renaming them into $\tau$, the process $\tau_E( \,
  \PAClient(\DG) \,)$ is obtained, shown in
  Figure~\ref{hidden_parallel_client_and_clientCS}. A branching
  bisimulation equivalence between this process and $\Client$ process
  can be defined without difficulty. In
  Figure~\ref{bisim_client_and_clientDG} related states are connected
  by differently dotted lines. Note, we have mirrored the $\Client$
  orientation with respect to the North-East South-West diagonal.
\end{proof}

\begin{figure}[bht!]
\begin{center}
\centerline{\input{Figures/clientDGhideglobal.pstex_t}}
\caption{Process $\tau_E(\PAClient(\DG))$.}
\label{hidden_parallel_client_and_clientCS}
\end{center}
\end{figure}

\begin{figure}[htb]
\centerline{\input{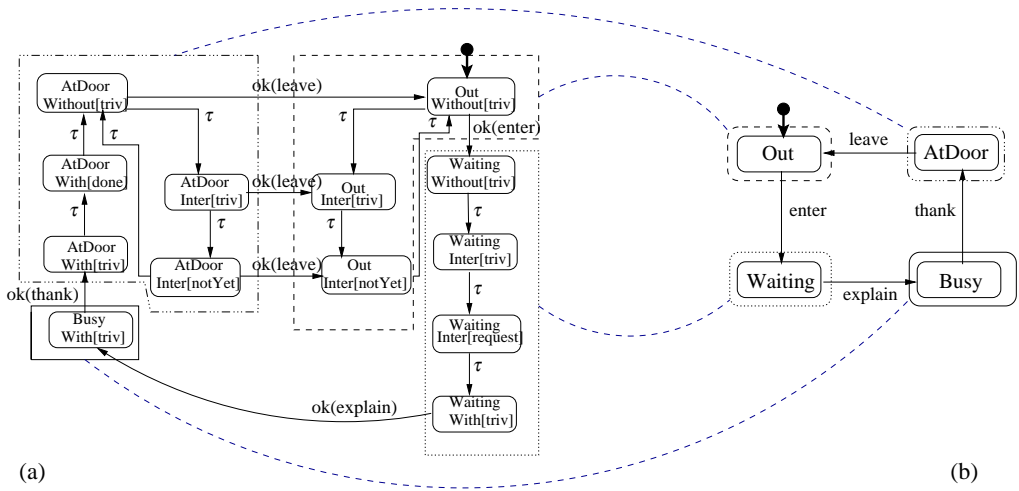}}
\caption{Branching bisimulation between (a)~$\tau_E(\PAClient(\DG))$
  and (b)~$\Client$.}
\label{bisim_client_and_clientDG}
\end{figure}

\noindent
In the general situation, the statement of the lemma provides a check
on the constraints imposed by the global STD on the detailed one. In
case the statement of the lemma holds, the complete behaviour of the
component is preserved in the consistent composition, assuming the
coordinating protocol provides all phase transfers in some order. In
case the statement of the lemma does not hold, part of the original
detailed behaviour has been eliminated because of the participation
with the protocol. This may be deliberate and allows for further
reduction of the detailed STD\@. This may be accidental, requiring the
overall coordination to be revised.


\section{Concluding remarks}
\label{sec:5}

In a Paradigm model several STDs may belong to the same component,
describing the component's dynamics either at various levels of
abstraction (detailed vs. global STDs) or describing different roles
of the component in various collaborations. Collaboration between
components is described in terms of dynamic constraints.  Vertical
consistency is maintained by keeping phases vs.\ detailed transitions
and traps vs.\ transfers aligned. Starting point of our investigation
here is the translation of Paradigm models into the process algebra
ACP and its coupling with the \texttt{mCRL2} toolset for subsequent
automated analysis. In the translated model, every STD from the
Paradigm model is represented by a recursive specification; the total
behaviour of a single component is obtained as a composition of the
recursive specifications of the detailed and the global component's
STDs; the overall system is specified as a parallel composition of all
components.

In this paper we have described a method to reduce the Paradigm
representation of the detailed STDs of the components, yielding
reduction of the overall Paradigm models, but preserving the overall
behaviour. The reduction boils down to inferring globally inert
detailed steps. By abstracting them away a smaller representation of
the detailed component is obtained. This representation contains all
information about the constraints the detailed behaviour imposes on
the global behaviour(s) of the component. The formal validation that
the reduction, indeed, does not change the overall model behaviour is
achieved via the process algebraic representation of the model: we
show for our client-server example that the reduced model is branching
bisimilar to the original one, having the same
properties. Furthermore, by means of a proper abstraction, in this
case applied at the global level, we can observe directly from the
model, by a direct comparison, in which way the global behaviour, and
thus the collaboration, affects the components' detailed behaviour. In
case no influence is to be expected, it is sufficient to show that the
component model is equivalent, up to branching bisimulation, to the
detailed behaviour after all global steps are abstracted away.

As to the contribution of this paper, we have established a further
connection of process algebra and its supporting apparatus to the
domain of coordination. In particular, abstraction and equivalences,
typical for process algebra, become techniques that can be applied to
coordination models, via the established link of the Paradigm language
and ACP, in our case. Thus, coordination can be initially modeled in
the Paradigm language which offers compositional and hierarchical
modeling flexibility. Then, model reduction can be applied, if
appropriate. Finally, via its process representation the model can be
formally analyzed.

As future work we want to address the reduction of general Paradigm
models and property guided reduction, in particular in a situation
with overlapping or orthogonal coordination. More specifically, it is
interesting to study the notion of globally inert detailed steps for a
component that participates in multiple collaborations. We plan to
investigate whether other techniques from process algebraic analysis,
e.g.\ iterated abstraction, and pattern-based simplifications can be
beneficial for the modeling with Paradigm.

\bibliographystyle{eptcs}
\bibliography{paco2011}

\end{document}